\theoremstyle{plain}
\newtheorem{theorem}{Theorem}[section]
\newtheorem{lemma}[theorem]{Lemma}
\theoremstyle{definition}
\newtheorem{remark}[theorem]{Remark}
\newtheorem*{def*}{Definition}
\newtheorem*{prfthm*}{Proof of Theorem}
\newcommand{\remove}[1]{}
\title{Distributed Butterfly Analysis using Mobile Agents}
\author{
 Prabhat Kumar Chand \\
  Indian Statistical Institute\\
  Kolkata,\\
  India. \\
  \texttt{pchand744@gmail.com} \\
   \And
 Apurba Das \\
  BITS Pilani Hyderabad\\
  Hyderabad,\\
  India.\\
  \texttt{apurba@hyderabad.bits-pilani.ac.in} \\
  \And
 Anisur Rahaman Molla \\
  Indian Statistical Institute\\
  Kolkata,\\
  India. \\
  \texttt{molla@isical.ac.in} \\
}
\begin{document}
\maketitle
\begin{abstract}

\emph{Butterflies}, or 4-cycles in bipartite graphs, are crucial for identifying cohesive structures and dense subgraphs. While agent-based data mining is gaining prominence, its application to bipartite networks remains relatively unexplored. We propose distributed, agent-based algorithms for \emph{Butterfly Counting} in a bipartite graph $G((A,B),E)$. Agents first determine their respective partitions and collaboratively construct a spanning tree, electing a leader within $O(n \log \lambda)$ rounds using only $O(\log \lambda)$ bits per agent. A novel meeting mechanism between adjacent agents improves efficiency and eliminates the need for prior knowledge of the graph, requiring only the highest agent ID $\lambda$ among the $n$ agents. Notably, our techniques naturally extend to general graphs, where leader election and spanning tree construction maintain the same round and memory complexities. Building on these foundations, agents count butterflies per node in $O(\Delta)$ rounds and compute the total butterfly count of $G$ in $O(\Delta+\min\{|A|,|B|\})$ rounds.
\end{abstract}
\keywords{Mobile Agents, Distributed Graph Algorithms, Bipartite Graphs, Butterfly Counting, Spanning Tree, Leader Election}
\sloppy  

\maketitle


\section{Introduction and Related Literature}\label{intro}

 In a bipartite graph, a $2 \times 2$ biclique is called a {\em butterfly}~\cite{AKP-2017-JCN,SP-2018-WSDM}. Butterfly counting has been extensively studied as it serves as a fundamental structure in bipartite networks~\cite{WLQ+-2020-ICDE}, aiding in community detection~\cite{AKP-2017-JCN} and peeling dense subgraphs~\cite{SP-2018-WSDM}. On the other hand, agents play a crucial role in network exploration across various domains, including underwater navigation~\cite{CGZ2021}, military systems~\cite{LSP2018}, social network modeling~\cite{ZSW2018}, and social epidemiology~\cite{ESS2012}. Agents on bipartite graphs similarly have diverse applications, such as drone-target comodule discovery~\cite{CH+-2020-CC}, where drones identify target groups. Similarly, in fraud detection, software agents analyze bipartite networks in social media to identify fake user profiles, reducing maintenance costs.

Butterfly counting has been vastly studied in both sequential and parallel settings. Sanei-Mehri et al.~\cite{SST-2018-KDD} introduced efficient exact and approximate algorithms, while Wang et al.~\cite{WL+-2019-VLDB} improved counting by prioritizing vertices. Zhou et al.~\cite{ZWC-2021-VLDB} extended this to uncertain bipartite graphs. Sar{\i}y{\"u}ce and Pinar~\cite{SP-2018-WSDM} introduced tip and wing decomposition for dense subgraph discovery. Parallel algorithms include Shi and Shun’s~\cite{SS-2022-MGA} multicore approaches, Xu et al.’s~\cite{XZY-2022-VLDB} GPU algorithm, and Weng et al.’s~\cite{WZ+-2022-TPDS} distributed methods for dynamic graphs.  

The mobile agent model used in the paper has been used for computing MIS~\cite{PramanickSPM23,maximal}, small dominating sets~\cite{run_for_cover}, and MST construction~\cite{manish_mst}. Recent work applied agents for triangle counting and truss decomposition~\cite{aamas_ea} and BFS tree construction~\cite{agent_bfs}. In this paper, we study the butterfly counting problem in an anonymous bipartite graph using autonomous agents.

\subsection{Our Contributions}\label{contrib}


Let $G((A,B),E)$ be an $n$ node arbitrary, simple, connected bipartite graph with a maximum degree $\Delta$ where $n$ mobile agents (with highest ID $\lambda$; known to all the agents) are initially placed at each of the $n$ nodes of $G$ in a dispersed initial configuration\footnote{In a \emph{dispersed initial configuration}, the agents start with at-most one agent at each node. For other starting configurations of agents, dispersion is first achieved in $O(n\log^2n)$ rounds using ~\cite{sudodisc}.}. The agents solve the following problems: 

\begin{enumerate}
    \item \textbf{Leader Election, Spanning Tree and Partition: }
     The agents (i) elect a leader and construct a rooted spanning tree (ii) determine their bi-partition sets ($A$ or $B$), and (ii) determine the size of sets $A$ and $B$ in $O(n \log \lambda)$ rounds, using $O(\log \lambda)$ bits of memory per agent.
    \item \textbf{Counting Node-Based and Total Butterflies: }
    The agents calculate (i) the number of butterflies at each node in $O(\Delta)$ rounds, and (ii) the total number of butterflies in $G$ in additional $O(\min\{|A|,|B|\})$ rounds using $O(\Delta\log \lambda)$ bits of memory per agent.
\end{enumerate}

\section{Model and Problem Statements}\label{model}

\subsection{Model}

\begin{enumerate}
    \item \textbf{Graph:} 
    \begin{itemize}
        \item Connected, undirected, and unweighted bipartite graph $G((A,B),E)$ with $n$ nodes, $|E|=m$ edges and bi-partition $(A,B)$.
        \item Nodes have no identifiers or labels nor any memory to retain information.
        \item Each node $v$ has degree $\delta(v)$; maximum degree of $G$ is $\Delta$.
        \item Outgoing edges at node $v$ are labeled with port numbers in $[0, \delta(v)-1]$.
        
    \end{itemize}

    \item \textbf{Mobile Agents:} 
    \begin{itemize}
        \item $n$ agents, each with a unique ID from $[0, n^{O(1)}]$.
        \item Starting Configuration: One agent per node (\textit{dispersed initial configuration}).
        \item Agents have limited memory but can update and retain information.
        \item Multiple agents can be at the same node (\textit{co-located}) or move through the same edge. Agents cannot stay on the edges. 
        \item Agents recognize entry/exit port numbers but have no visibility beyond their current node.
        \item Only co-located agents can communicate and exchange information.
    \end{itemize}

    \item \textbf{Communication Model and Time Cycle (CCM Model):} Agents operate in synchronous rounds using a global common clock. Each agent follows a \textit{Communicate-Compute-Move (CCM)} cycle per round: \textit{Communicate:} Exchanges information with co-located agents. \textit{Compute:} Processes gathered data within memory constraints, and \textit{Move:} Moves to a neighboring node based on computed decisions.

    \item \textbf{Complexity Measures:} (i) \emph{Time complexity:} Number of rounds to complete the algorithm and (ii) \emph{Memory complexity:} Bits required per agent for execution.
\end{enumerate}

\subsection{Problem Statement}

Let $G((A,B),E)$ be an $n$ node arbitrary, anonymous, simple, connected bipartite graph with a maximum degree $\Delta$. Let $n$ mobile agents with distinct IDs in the range $[0,n^{O(1)}]$, be placed at each of the $n$ nodes of $G$ in a \emph{dispersed} initial configuration. Let $\lambda$ be the highest ID of the agents, i.e., $\lambda\in[0,n^{O(1)}]$. The agents solve the following problems. 

\begin{enumerate}
    \item~\textit{Bi-partition Identification and Partition Size Determination}  - The agents determine the bi-partition to which they belong, calculate the size of each bi-partition and construct a spanning tree rooted at the agent with the smallest ID (leader), which is identified in the process.
    \item~\textit{Butterfly Counting} - The agents compute the number of butterflies associated with each node (local count) and the total number of butterflies (global count) in the graph $G$.
\end{enumerate}

\section{Partition Assignment and Spanning Tree Construction}\label{sec:prelims}


 We assume $ n $ agents start in a \emph{dispersed} configuration, each occupying a node in the bipartite graph $ G((A,B),E) $. This section describes how they identify their partition and compute its size. A leader agent starts the algorithm by assigning itself a partition, and its children continue to explore and assign partitions to their descendants while constructing a spanning tree. When the leader is known, the tree facilitates information dissemination; otherwise, when a leader is not known as a priori, the spanning tree also provides termination, as no global graph parameters are initially known. Our leader election and spanning tree construction extend to general graphs and significantly improve upon prior work~\cite{manish_mst} in a \emph{dispersed} setting. See Section~\ref{sec: without a leader} for details. We first present the algorithm for the known leader case, which serves as the foundation for the no-leader version in the next section. 

\subsection{A Leader Agent is Known}\label{sec: leader known}

Each agent maintains the following variables: $ partition $ (a single-bit value, 0 or 1, with the leader $ \tilde{r} $ set to 0), $ parent $ (stores the port number through which the agent was first discovered ), $ child $ (to record the port number through which the agent has moved to an undiscovered adjacent node to identify new agent), $ completion $ (a boolean, initially \texttt{false}, set to \texttt{true} when all children report completion), and $ nextport $ (port number for the next exploration). A variable $sibling$ is employed to reduce memory on the $child$ pointers which is described in Remark~\ref{rem:mem}. 



 The algorithm starts with the leader $\tilde{r}$ setting its partition to 0 and moving through the smallest available port. Upon discovering a new agent $r$, it assigns $r$ a partition of 1. In the second round, both $\tilde{r}$ and $r$ explore their minimum unvisited ports in parallel, continuing the partition assignment and spanning tree construction. In general, when an agent $r_i$ is visited by $r_j$, it inherits a partition opposite to $r_j$, records the discovery port as its parent, and starts exploring its neighbors. If the partition is already assigned, the visit is ignored. Meanwhile, $r_j$ updates its child pointers accordingly. The strategy which optimizes the exploration is that all agents with assigned partitions explore their next potential children in parallel. An agent $r_i$ explores its neighbors sequentially via $r_i.\texttt{nextport}$, updating it in each round. When an agent completes exploration (it sets $\texttt{nextport} = -1$) and has no children, it sends a \texttt{completion} message, node degree value and partition count (or its sum up-to this children) to its parent. These messages propagate up the tree until $\tilde{r}$ receives all completions, signaling the completion of the exploration. Now, $\tilde{r}$ transmit these values down the tree in a similar way. In the first round, it sends these values to its first child. In the second round, $\tilde{r}$ and its first discovered children send their second and first children, respectively. In this way, these values reach all agents in the graph within $O(n)$ rounds. Once every agent is aware of these values, they use a more efficient down-casting approach as detailed in Section~\ref{sec:comm}.

\begin{lemma}\label{lem:leader known}
    The algorithm in Section~\ref{sec: leader known} assigns a partition to each agent in $O(n)$ rounds.
\end{lemma}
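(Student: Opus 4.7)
The plan is to bound, for each agent $a$, the round $s(a)$ at which $a$'s partition is first set, using the structure of the spanning tree $T$ being built in the process. Correctness of the partition assignment itself will follow by a straightforward induction on $s(a)$: every newly discovered agent inherits the partition opposite to its discoverer (its parent in $T$), and the bipartiteness of $G$ guarantees that this yields a globally consistent 2-coloring up to a global swap of labels $\{0,1\}$.

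For the round bound, my central observation is that once $p$ is assigned a partition at round $s(p)$, it begins probing its incident ports one per round in ascending $\texttt{nextport}$ order, so that for each child $c$ of $p$ in $T$ reached via port index $\pi_p(c)$, we get $s(c)\le s(p)+O(\pi_p(c))$. Telescoping along the unique root-to-$a$ path in $T$ then gives
\[
s(a)\;\le\; O\Bigl(\sum_{(p,c)\in\mathrm{path}(\tilde r,a)}\pi_p(c)\Bigr).
\]
The hard part will be to bound this telescoped sum by $O(n)$: a naive $\pi_p(c)\le\Delta$ substitution yields only $O(n\Delta)$, so the parallelism of the exploration must be exploited, together with the fact that each edge of $T$ is ``paid for'' exactly once.

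My approach for this bottleneck step is an amortized charging argument along the critical root-to-deepest-leaf path of $T$. Each round on this path would be charged either (i) to a unique newly-discovered agent, of which there are at most $n-1$, or (ii) to a unique (parent, port) pair that probes a neighbor already assigned by some other branch. The essential reason this argument should succeed is that the parallel-exploration strategy ensures that the subtree rooted at a previously-discovered child of $p$ develops simultaneously with $p$'s probes of its later ports, so the wasted probes at $p$ do not lengthen the critical path of any descendant. Hence the critical-path length is capped at $O(n)$, giving $s(a)=O(n)$ uniformly in $a$. Termination (every agent is eventually assigned) then follows from the connectedness of $G$ combined with the completion messages propagating back to the root $\tilde r$, so the $O(n)$ round bound covers the entire partition-assignment phase.
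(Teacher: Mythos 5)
Your outline is fine up to the telescoping bound $s(a)\le O\bigl(\sum_{(p,c)}\pi_p(c)\bigr)$, but the step you yourself flag as the bottleneck — getting this sum down from $O(n\Delta)$ to $O(n)$ — is not actually established, and the justification you offer does not close it. Your charging scheme puts each round on the critical path into one of two categories, and category (i) (rounds that discover a new agent) is indeed bounded by $n-1$. The problem is category (ii): along a single root-to-leaf path, the number of distinct (parent, port) pairs that probe an already-assigned neighbor is bounded only by the sum of the degrees of the nodes on that path, i.e.\ by $\Theta(\min\{n\Delta, m\})$ in the worst case, not by $O(n)$. Your appeal to parallelism — that the subtree under a previously discovered child of $p$ develops while $p$ probes its later ports — explains only why category (i) probes cost nothing extra for \emph{descendants through those children}; it says nothing about category (ii) probes, which genuinely delay the round at which $p$ reaches the port leading to its critical-path child $c$, and hence delay $s(c)$ and everything below it. The reason these wasted probes cannot accumulate is inherently \emph{global}: a probe from $p$ to an already-assigned neighbor consumes an edge internal to the set of already-discovered agents, and that set can support only $\binom{k}{2}$ such edges when it has $k$ members — a budget that is paid for by the progress of \emph{other} branches, which a per-path local charging cannot see.

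The paper's proof makes exactly this global argument and avoids path accounting entirely: it tracks the set $T_t$ of agents discovered after $t$ phases (each phase being an explore round plus a return round) and counts the cumulative number of port probes performed in parallel by all discovered agents. If $k$ agents have been discovered, the first $k$ phases supply at least $\tfrac{k(k+1)}{2}$ probes, which exceeds the number of edges a subgraph on $k$ nodes can contain, so the probes cannot all stay internal to the discovered set and at least one new agent must be found in phase $k+1$. Hence $|T_t|$ grows by at least one per phase and all $n$ agents are assigned within $n$ phases, i.e.\ $2n$ rounds. To repair your write-up you would either need to adopt this global counting (in which case the telescoping machinery becomes unnecessary), or supply a genuinely new argument for why the category (ii) charges along a single path total $O(n)$ — which, as far as I can see, would have to smuggle in the same global edge-budget observation anyway.
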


\begin{proof} 
An agent receives its partition when it is first visited by a neighboring agent. Let us partition the set of agents into two disjoint subsets: $T_t$, the set of agents that have already been visited and assigned a partition after $t$ phases (each phase consists of two rounds—one for exploration and one for return), and $S_t$, the set of agents yet unvisited after $t$ phases. Without loss of generality, assume that the first $k$ agents were discovered over $l$ distinct phases. i.e., $|T_l| = k$. We aim to show that by the end of the $(k+1)$‑th phase, at least one new agent is visited i.e., $|T_{k+1}|\geq k+1$. For the analysis, we first show how explorations happen till $k^{th}$ phase and for that we divide the $k$ phases into two parts: first, the $l$ phases by which time $k$ agents were already discovered and next the remaining $k-l$ phases. In the first $l$ phases, the number of exploration attempts is at least $1 + 2 + \dots + l = \frac{l(l+1)}{2}$. In the remaining $k - l$ phases, each of the $k$ agents can explore one edge per phase, contributing an additional $(k - l)k$ possible explorations. Therefore, the total number of edge explorations possible within the subgraph induced by the $k$ agents during the first $k$ phases is $\frac{l(l+1)}{2} + (k - l)k$.
Now, the minimum integral value of the expression is $ \frac{k(k+1)}{2} $, which is greater than the number of edges a graph on $ k $ nodes could possibly have. Therefore, after $k$ phases, all internal edges among the $k$ agents must have been explored. This shows that even if the subgraph induced by these $k$ agents is dense, all internal edges are eventually examined. Similarly, in a sparse graph, the agents may have already exhausted all internal edges and begun discovering new agents before reaching the $k$‑phase threshold. In either case, the process ensures that at least one new unvisited agent must be discovered during the $(k+1)$‑th phase, i.e., $|T_{k+1}|\geq k+1$. By applying this argument, we can conclude that all $n$ agents will have received their partition assignments within $n$ phases or $2n$ rounds, completing the proof.
\end{proof}

\subsection{No Leader Agent is Known}\label{sec: without a leader}

When a leader is unknown, we first elect one. A straightforward approach is repeated communication until a leader emerges, but two challenges arise: (i) agents may fail to meet due to simultaneous movement, and (ii) termination is difficult without global parameters like $n$, preventing agents from knowing when all others have been reached. Existing methods~\cite{run_for_cover,aamas_ea,manish_mst} rely on parameters such as $\Delta$ (max degree) and $\lambda$ (max agent ID) for coordination. Since agents meet based on ID bits rather than port order, the algorithm must allow a $\Delta$-round window for interactions, leading to an $O(\Delta)$ round communication window for meeting full neighbors.

We propose a novel approach that enables agents to meet their neighbors one by one in a predefined order. Unlike prior methods, our technique requires no graph parameters except $\lambda$, ensuring agents meet efficiently. It helps achieve leader election and spanning tree construction from a \emph{dispersed} configuration using only $O(\log \lambda)$ memory and completes in $O(n \log \lambda)$ time, significantly improving upon~\cite{manish_mst} (which elects a leader among $ n $ agents in $ O(m) $ time using $ O(n \log n) $ bits from any initial configuration). Before detailing our leader election algorithm, we introduce this meeting protocol that guarantees two agents meet within $O(\log \lambda)$ time.





\subsubsection{A Meeting Protocol for Two Adjacent Agents}

In this section, we propose a deterministic protocol that guarantees a meeting between two adjacent agents within $ O(\log \lambda) $ rounds. This protocol ensures that at least one of the agents moves to the other's location, enabling the agents to meet.


\begin{algorithm}[H]
\caption{Meeting Protocol for Two Adjacent Agents}
\label{alg:meeting_protocol}
\begin{algorithmic}[1]

\Require Two agents, $r_u$ and $r_v$, each with a unique ID, are initially located at adjacent nodes $u$ and $v$, respectively. Node $u$ is connected to $v$ via outgoing port $p_u$ (and enters $v$ via $p_v$). Agent $r_u$ has no knowledge of $r_v$ or $v$, but wants to meet the neighboring agent via port $p_u$.
\Ensure Agent $r_u$ meets the agent located at the neighboring node accessible via port $p_u$ in a span of $4\log\lambda$ rounds. \Comment{A total of $2\log\lambda$ bits and each bit corresponding to $2$ rounds.}

\State Pad $r_u$'s ID with leading zeros to obtain a $\log \lambda$-bit binary string $r_u.b$
\State $r_u.new\_ID \gets \overline{r_u.b} \, || \, r_u.b$ \Comment{Concatenate bitwise complement of $b_u$ with $b_u$; length $2\log \lambda$}
\For{$i = 0$ to $2\log \lambda - 1$}
    \State $current\_bit \gets i{+}1$-th least significant bit of $r_u.new\_ID$
    \If{$current\_bit = 1$}
        \State Move to neighbor $v$ via port $p_u$ and meet with $r_v$ (if present) \Comment{Round $2i + 1$}
        \State Return to original node $u$ \Comment{Round $2i + 2$}
    \Else
        \State Stay at node $u$ during Rounds $2i + 1$ and $2i + 2$
    \EndIf
    \State $i \gets i+1$
\EndFor

\end{algorithmic}
\end{algorithm}

Let agents $r_u$ and $r_v$ initially reside at adjacent nodes $u$ and $v$, with $u$ joining $v$ via outgoing port $p_u$ at $u$. $r_u$ wants to meet its neighbour via $p_u$. Each agent pads its ID to a $\log \lambda$-bit binary string and appends its bitwise complement to the left, forming a $2\log \lambda$-bit $new\_ID$ (Lines 1–2). The protocol runs for $4\log \lambda$ rounds, with two rounds per bit position from LSB to MSB (Lines 3–12). For each bit, if $r_u.new\_ID$ has a $1$, it moves to node $v$ in the first round and returns in the second; otherwise, it stays at $u$. The protocol is local and requires no knowledge of $r_v$’s state. An agent invokes it with any specific port it wants to take. 

Since the IDs are distinct, there exists at least one bit position where one visiting agent has a $1$ and the respective neighbour has a $0$ in their respective $new\_ID$, ensuring that one moves while the other stays—leading to a meeting. The concatenation with the complement is crucial: without it, simultaneous movements might prevent two agents from meeting. Consider an example where agent $r_u$ wants to meet its neighbor $r_v$ via a specific port $p_u$. Suppose $r_u.ID = 0010$ and $r_v.ID = 0110$. The first, second, and fourth bits of their IDs are identical, meaning they will both either stay or move simultaneously — making a meeting impossible. While their third bits differ, $r_u$ has a 0, indicating it remains stationary, and $r_v$ has a 1, meaning it moves. However, $r_v$’s movement is not guaranteed to be toward $r_u$; it could be exploring a different neighbor. Thus, even this differing bit does not ensure a meeting. By concatenating the ID with its complement, $r_u$'s new ID becomes $11010010$ and $r_v$'s becomes $10010110$, the meeting is now guaranteed. Specifically, at the 7th LSB, $r_u$ moves while $r_v$ remains stationary, ensuring a successful meeting.



The agents first elect a leader and construct a single spanning tree rooted at the leader agent. Each agent begins as a separate single node spanning tree, considering itself as the leader. The core idea is to implement the same tree construction methodology from algorithm in Section~\ref{sec: leader known} with a crucial modification; whenever two agents meet, the one with the smaller ID dominates, causing the larger ID agent to give up on its leader status and join the smaller ID’s tree. This process continues until all agents are connected under a single leader with the smallest ID.  

Each agent additionally maintains a variable \texttt{treelabel}, which initially stores its own ID and helps in tracking the spanning tree to which it belongs. At the start, each agent initializes itself as a leader, assigning its own ID to the \texttt{treelabel} variable. Each agent meets its neighbors one by one using Algorithm~\ref{alg:meeting_protocol}. If two agents belong to different spanning trees (\texttt{treelabel} values are different), the one with the smaller \texttt{treelabel} absorbs the other. The larger ID agent updates its \texttt{treelabel} and reconfigures its \texttt{parent} and \texttt{child} pointers to join the new tree. After merging, all agents in the absorbed tree eventually update their variables, ensuring the tree structure is maintained correctly.  The agent with the smallest ID (i.e., the smallest \texttt{treelabel}) continues absorbing other trees until a single spanning tree remains.  Once an agent has received \texttt{completion} messages from all its children, it propagates the message upward. Eventually, when the agent with the least ID receives completion signals from all its children, the algorithm terminates. Once the leader $\tilde{r}$ is elected, the agents can now execute the algorithm in Section~\ref{sec: leader known}.

\begin{lemma}\label{lem:correctness}
   There is a unique agent $r$ satisfying: (i) $r.leader = \texttt{true}$, and (ii) $r.completion = \texttt{true}$.
\end{lemma}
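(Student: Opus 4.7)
The plan is to decompose the statement into three claims: (a) the agent $\tilde r$ with the globally smallest ID keeps $r.leader = \texttt{true}$ throughout the execution; (b) every other agent eventually resets $r.leader$ to \texttt{false}; and (c) $\tilde r$ eventually sees $r.completion = \texttt{true}$. Together these give existence and uniqueness, since only $\tilde r$ can simultaneously satisfy both conditions.

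For claim (a), I would use a simple invariant argument on \texttt{treelabel}. At initialization every agent's \texttt{treelabel} equals its own ID, and the merge rule is that in any meeting between agents from different trees, the one with the \emph{larger} \texttt{treelabel} is the one that surrenders its leader flag and is absorbed. Since $\tilde r$ has the minimum ID, its \texttt{treelabel} can never be the larger of the two in any meeting it participates in, so $\tilde r.leader$ is never overwritten from \texttt{true} to \texttt{false}. For claim (b), I would argue by connectivity: the meeting protocol of Algorithm~\ref{alg:meeting_protocol} guarantees that any two adjacent agents meet within $O(\log \lambda)$ rounds whenever the relevant endpoint attempts the meeting, and each root keeps invoking its neighbors as long as it still has unexplored ports. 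Using this together with the tree-exploration schedule inherited from Section~\ref{sec: leader known}, one shows inductively that the tree rooted at $\tilde r$ grows along every cut separating it from the remaining agents; hence every other agent is eventually merged into $\tilde r$'s tree and at the moment of merging flips its $leader$ bit to \texttt{false}.

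For claim (c), once the tree rooted at $\tilde r$ contains all $n$ agents, no further merges are possible and the parent/child pointers are stable. Leaves of this final tree finish their port exploration, set $\texttt{nextport}=-1$, and send \texttt{completion} upward. A routine induction on height, combined with the rule that an agent only sets $completion = \texttt{true}$ after all current children have reported, then shows that the signal reaches $\tilde r$ and sets $\tilde r.completion = \texttt{true}$. Combining with (a) and (b), $\tilde r$ is the unique agent with both flags set.

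The main obstacle I anticipate is making claim (c) rigorous in the presence of merges: when a subordinate agent is absorbed mid-execution, its old parent loses a child and its new parent gains one, and a child that had already reported \texttt{completion} upward before being re-parented could otherwise let a false completion race ahead of the actual exploration. I would address this by stating precisely which \texttt{completion} counters are reset upon re-parenting (the new parent's subtree is not yet complete until it re-verifies) and by identifying a monotone potential, for example the number of agents in $\tilde r$'s tree that still have $completion = \texttt{false}$, which can only decrease after the last merge involving that subtree. Once this invariant is in place, the induction on tree height carries through cleanly.
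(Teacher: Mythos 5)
Your proposal is correct and follows essentially the same route as the paper: the minimum-ID agent $\tilde r$ is never absorbed and so retains $leader=\texttt{true}$, its tree eventually subsumes all others, and \texttt{completion} signals propagate up the final tree to set $\tilde r.completion=\texttt{true}$. The only structural difference is in how uniqueness is argued: the paper takes a hypothetical second leader $r$, picks a path of agents connecting $\tilde r$ to $r$, and observes that the two trees must collide at some intermediate agent, at which point the lower-ID tree absorbs the other and one of the two surrenders its leader flag; you instead prove the stronger claim (b) that \emph{every} agent other than $\tilde r$ is eventually merged into $\tilde r$'s tree, from which uniqueness is immediate. Your version demands a genuine liveness argument (the tree grows across every cut), which the paper only asserts informally in its existence half, so the two proofs end up carrying comparable burdens. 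One point in your favor: you explicitly identify the race between \texttt{completion} messages and ongoing merges (a re-parented child whose old report could falsely complete a subtree) and propose a monotone potential to handle it; the paper's proof is silent on this issue, so your claim (c) is more careful than the argument it replaces.
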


\begin{proof}
    First we prove there is at-least one agent which satisfies these conditions. For this, let us consider the agent with the least ID, say $\tilde{r}$. Since $\tilde{r}$ has the least ID, its tree subsumes all other trees as it ultimately grows out reach all the $n$ nodes. At this point, it start receiving $completion$ signals from the leaf nodes, until finally, its receives these completion signals from its immediate children and terminating the algorithm by setting its own $completion=\texttt{true}$. Also, since it is never subsumed by any other tree, its $leader$ status ultimately remains to be $\texttt{true}$.

    Now, we prove its uniqueness. Assume another agent $r$ sharing the same status of variables $completion$ and $leader$ upon the termination of the algorithm. Now, since the graph is a connected one, there exists a sequence of agents $\tilde{r},r_{i_1},r_{i_2},\dots,r_{i_j},r$. As, the tree belonging to agent $r$ moves towards $\tilde{r}$ via the sequence of agents, there must be an agent $r_{i_k}; 1\leq k\leq j$ where the trees from $\tilde{r}$ and $r$ meet. Now, if $\tilde{r}$ has a lower ID, ultimately, it subsumes $r$'s tree or vice versa. Hence, one of them must give away its leadership status. Hence, there cannot be two agents with $leader=\texttt{true}$ at the end of the execution of this algorithm.
\end{proof}



\begin{lemma}[Time Complexity]\label{lem:time}
    The algorithm in Section~\ref{sec: without a leader} takes $O(n\log\lambda)$ rounds to complete.
\end{lemma}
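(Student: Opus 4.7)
My plan is to reduce the analysis to Lemma~\ref{lem:leader known} by viewing the no-leader execution as the leader-known exploration process, but with every atomic exploration step inflated by the cost of one invocation of Algorithm~\ref{alg:meeting_protocol}. Since a single meeting between adjacent agents now requires $4\log\lambda$ rounds (rather than the $2$ rounds used by the leader-known version), each ``phase'' in the sense of Lemma~\ref{lem:leader known} becomes a macro-phase of $O(\log\lambda)$ rounds. If I can show that the number of macro-phases is $O(n)$, the $O(n\log\lambda)$ bound will follow immediately.

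To control the number of macro-phases, I would focus on the tree rooted at the agent $\tilde{r}$ with the globally smallest ID. By Lemma~\ref{lem:correctness} this agent is never absorbed, so its \texttt{treelabel} remains fixed throughout the execution and its tree only grows. I then would argue that the growth of $\tilde{r}$'s tree dominates the process: whenever $\tilde{r}$'s tree explores a port using the meeting protocol, one of three things happens within $O(\log\lambda)$ rounds --- it discovers a brand-new (still-singleton) agent and absorbs it, it meets an agent belonging to some other tree and absorbs that entire subtree into its own, or it meets an agent already in its own tree. In the first two cases, the size of $\tilde{r}$'s tree strictly increases; in the third case, $\tilde{r}$'s tree has explored one more internal edge. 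Using the same counting argument as in the proof of Lemma~\ref{lem:leader known} --- if $\tilde{r}$'s tree has size $k$ after $l$ macro-phases, then at least $\tfrac{l(l+1)}{2} + (k-l)k$ exploration attempts have been made inside it, which exceeds $\binom{k}{2}$ once the phase count reaches $k$ --- I can conclude that $\tilde{r}$'s tree must acquire at least one new agent every $k$ macro-phases, so it covers all $n$ agents within $O(n)$ macro-phases.

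The main obstacle, and the place where the analysis differs from the leader-known setting, is handling the bookkeeping when $\tilde{r}$'s tree absorbs a large alien subtree. I would address this by showing that the absorption incurs no extra rounds beyond those already charged to the meeting protocol: the meeting itself installs a new \texttt{parent} pointer at the boundary agent, and the updates to \texttt{treelabel}, \texttt{parent}, and \texttt{child} in the interior of the absorbed subtree can be piggy-backed on the ongoing upward and downward exploration traffic without blocking $\tilde{r}$'s frontier. In particular, even if some interior agents of the absorbed subtree have stale \texttt{treelabel} values for a while, $\tilde{r}$'s own exploration continues at the frontier, and stale agents that have already finished exploring simply relay the completion wave once their labels are refreshed. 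Hence the frontier makes progress at the full rate dictated by the leader-known analysis.

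Putting these pieces together, the number of macro-phases is $O(n)$, each macro-phase costs $O(\log\lambda)$ rounds by Algorithm~\ref{alg:meeting_protocol}, and the final completion wave from the leaves up to $\tilde{r}$ (followed by termination) adds at most $O(n\log\lambda)$ further rounds. Summing these contributions yields the claimed $O(n\log\lambda)$ round complexity.
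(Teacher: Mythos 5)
Your proposal is correct and takes essentially the same route as the paper, whose proof simply observes that from the perspective of the least-ID agent $\tilde{r}$ the execution mirrors the leader-known algorithm of Section~\ref{sec: leader known}, with each exploration step inflated to $O(\log\lambda)$ rounds by the meeting protocol. Your write-up actually supplies considerably more detail (the macro-phase accounting, the reuse of the edge-counting argument, and the treatment of subtree absorption) than the paper's two-sentence argument.
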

\begin{proof}
    The time complexity of the algorithm is primarily determined by $\tilde{r}$ (the leader). Since the algorithm operates in the same manner as the algorithm in Section~\ref{sec: leader known} from the perspective of $\tilde{r}$, it has a similar time complexity of $O(n \log \lambda)$ rounds.
\end{proof}

\begin{remark}[Memory Complexity]\label{rem:mem}
    The main memory complexity of the algorithm comes from maintaining the child pointers. Since an agent can have up to $\Delta$ children, an agent would typically require at-least $O(\Delta\log \lambda)$ bits to manage them. However, we can reduce this complexity by introducing a $sibling$ variable. Whenever an agent discovers its first child, it sets its $child$ pointer to the corresponding outgoing port. Now, as it discovers the next child, it does not store it as its children; rather, it moves to its previous (first) child and stores the outgoing port of its new second child into the first $sibling$ variable and so on. This eliminates the need for using $\Delta$ child pointers for a single agent, and the $child$ pointers can be efficiently stored using $O(\log\lambda)$ bits. The variables, $parent, child$ and $nextport$ requires $O(\log\lambda)+O(\log\Delta)=O(\log\lambda)$ bits. While the other variables require $O(1)$ bits. Therefore, we have a memory complexity of $O(\log\lambda)$ bits.

\end{remark} 

\begin{theorem}\label{thm: no leader}
    Given an arbitrary simple connected graph $G$ with $n$ nodes and $n$ agents in a dispersed initial configuration. Then, a leader among the $n$ agents can be elected, and a spanning tree rooted at the leader node can simultaneously be constructed in $O(n\log \lambda)$ rounds using $O(\log n)$ bits of memory per agent. The agents do not require any prior knowledge about any global parameters, except for $\lambda$, which is an upper bound on the IDs of the agents. 
\end{theorem}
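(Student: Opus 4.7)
The plan is to assemble the theorem from three ingredients already established in the section: the meeting protocol of Algorithm~\ref{alg:meeting_protocol}, the correctness argument of Lemma~\ref{lem:correctness}, and the time/memory analyses in Lemma~\ref{lem:time} and Remark~\ref{rem:mem}. I would begin by observing that the theorem makes three separate claims (correctness of leader election, correctness of the spanning tree, and the round/memory bounds), and then handle each in turn while pointing out that the construction works for arbitrary simple connected graphs, not just bipartite ones, since no bipartite-specific step is ever invoked during leader election or tree construction.

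For \emph{correctness}, I would invoke Lemma~\ref{lem:correctness} to argue that upon termination exactly one agent $\tilde r$ has $leader=\texttt{true}$ and $completion=\texttt{true}$, namely the agent with the smallest ID. The spanning tree is then justified by noting that every agent sets its $parent$ pointer exactly once (upon first being visited), that the resulting parent relation is acyclic by construction (a newly visited agent points to the agent that discovered it, which was itself visited earlier), and that absorption into the smallest-ID tree ensures every agent is eventually attached to $\tilde r$. For \emph{time complexity}, I would lean on Lemma~\ref{lem:time}: from the perspective of $\tilde r$, the process is identical to the leader-known algorithm of Section~\ref{sec: leader known}, where Lemma~\ref{lem:leader known} gives $O(n)$ phases; the only overhead is that each exploration step now requires the meeting protocol, contributing a $\Theta(\log\lambda)$ multiplicative factor, yielding $O(n\log\lambda)$ rounds overall. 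For \emph{memory}, I would cite Remark~\ref{rem:mem}, which uses the $sibling$ trick to replace the naive $O(\Delta\log\lambda)$ cost of storing child pointers by $O(\log\lambda)$ bits per agent, and observe that $parent$, $child$, $nextport$, and $treelabel$ each fit in $O(\log\lambda)$ bits while $partition$, $completion$, and $leader$ occupy $O(1)$ bits. Since $\lambda = n^{O(1)}$, this equals $O(\log n)$ bits, matching the statement.

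The step I expect to require the most care is the time analysis, because one must make explicit that interleaving the meeting protocol with tree growth does not stall the amortized per-phase progress argument of Lemma~\ref{lem:leader known}. The subtlety is that while trees are still being merged, an agent's $treelabel$ may change and its partial exploration must be reinterpreted in the context of the new tree; I would argue that no already-built parent/child edge is ever destroyed (only re-rooted by updating $treelabel$), so the cumulative count of distinct edges explored across the agents of the eventual winning tree still satisfies the inequality driving Lemma~\ref{lem:leader known}, and hence the $O(n)$ phase bound transfers verbatim. Wrapping this into the $O(\log\lambda)$-per-meeting overhead yields the claimed $O(n\log\lambda)$ bound and completes the proof.
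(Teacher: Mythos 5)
Your proposal is correct and follows essentially the same route as the paper, which proves this theorem implicitly by combining Lemma~\ref{lem:correctness} (uniqueness of the leader), Lemma~\ref{lem:time} (the $O(n\log\lambda)$ round bound via reduction to the leader-known case with the $O(\log\lambda)$ meeting-protocol overhead), and Remark~\ref{rem:mem} (the $sibling$ trick for $O(\log\lambda)=O(\log n)$ memory). Your added care about the acyclicity of the parent relation and about why tree merging does not disrupt the phase-counting argument of Lemma~\ref{lem:leader known} fills in details the paper leaves terse, but it is an elaboration of the same argument rather than a different one.
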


The tree construction and leader election methodology used here can also be extended to general graphs. Also, this process can produce spanning trees of large diameter even in a densely connected graph. For example, in a complete graph of $n$ nodes a spanning tree could be a path of length $n-1$ (outer ring) although its diameter is $1$. Therefore, when two farthest agents need to communicate via the spanning tree, it can take $O(n)$ rounds instead of $O(D)$ rounds. However, in bipartite graphs, we see that the diameter of the spanning tree can only be at-most $O(\min\{|A|,|B|\})$, which we note in the following remark.

\begin{remark}\label{rem:diam}
    The diameter of a spanning tree constructed on a bipartite graph can be at-most $2\cdot\min\{|A|,|B|\}$
\end{remark}
\begin{proof}
    Consider the largest path in the spanning tree and assume that it joins $r_i$ and $r_j$. Now, the maximum length of this path can be $2\cdot\min\{|A|,|B|\}$, since, it must alternately connect between the agents of two partitions $A$ and $B$. In addition, since its a path, it cannot visit any node more than once. 
\end{proof}

\section{Butterfly Counting}\label{Sec: Butterfly Counting}

In this section, we present an algorithm for counting butterflies in a bipartite graph $ G((A,B),E) $. We designate the partition containing the leader (least ID agent) $ \tilde{r} $ as $ A $. The algorithm counts butterflies associated with each node in $ A $ through Phases 1 and 2. In the final phase, agents compute the total number of butterflies by aggregating individual counts from each $ a_i \in A $. Before the algorithm starts, we assume that the $ n $ agents are initially dispersed and have executed the algorithm in Section~\ref{sec: without a leader} to identify their partition, determine its size, compute $ \Delta $, and construct the spanning tree of $ G $.

We use the following notation: agents in $ A $ are denoted as $ a_i $, those in $ B $ as $ b_i $; the neighborhood list of $ a_i $ is $ N(a_i) $; the number of common neighbors between $ a_i $ and $ a_j $ is $ N(a_i, a_j) $; the number of butterflies involving $ a_i $ and $ a_j $ is $ B(a_i, a_j) $; and the total number of butterflies containing $ a_i $ is $ B(a_i) $.

\textbf{Phase 1: }Each agent scans its neighborhood and records its neighbors in memory. Before counting begins, agents must establish communication, which requires knowledge of their partition, and $ \Delta $, which is used to synchronize the movements of the agents. Communication follows alternating phases: in odd rounds, agents in $ A $ communicate while those in $ B $ remain stationary; in even rounds, the vice-versa. This ensures that within $ 2\Delta $ rounds, all agents have exchanged information with their neighbors. Each agent requires $ O(\Delta \log \lambda) $ bits of memory for this process, since each agent $a_i$ needs at-most $\Delta$ entries for managing $N(a_i)$ list, with each entry containing an ID of $O(\log \lambda)$ bits. This forms the dominating chunk of the memory for each agent.

\textbf{Phase 2: }In Phase 2, comprising of $2\Delta$ rounds, each agent $ a_i $ counts the butterflies it shares with $ a_j(\neq a_i)$. It revisits its neighbors and, for each $ b_i \in N(a_i) $, determines the number of common neighbors between itself and $ a_j $, using $ b_i $'s recorded neighbor list $ N(b_i) $. After iterating through all $ b_i $, $ a_i $ computes the number of butterflies involving $ (a_i, a_j) $ as 

\[
B(a_i, a_j) = 
\begin{cases} 
\binom{N(a_i, a_j)}{2}, & \text{if } N(a_i, a_j) > 1 \\ 
0, & \text{otherwise}
\end{cases}
\]

  , sums $\sum_jB(a_i,a_j)$, and stores this value in its memory. This gives the total butterfly count involving $a_i$ i.e, $B(a_i)$.


\textbf{Phase 3: } After each $ a_i $ has computed its local butterfly count, it sends this count to the leader $\tilde{r}$ via the constructed spanning tree. Although each child of an agent can simultaneously transmit its value to its parent, requiring at most $ \min\{|A|, |B|\} $ rounds for all values $ B(a_i) $ to reach $\tilde{r}$, the reverse process—disseminating the total count to all agents—requires a more efficient approach, which we detail below. First, $\tilde{r}$ computes the total butterfly count by summing all $ B(a_i) $ values received and dividing it by 2 using $ \min\{|A|, |B|\} $ rounds.

\subsection{Downward Communication Through the Spanning Tree}\label{sec:comm} To efficiently broadcast a value down a tree, we employ the following strategy: Over the next $ 2\min\{|A|, |B|\} $ rounds, each child oscillates between its position and its parent, ensuring information propagation. In the first two rounds, $\tilde{r}$'s immediate children receive the value. In the next two rounds, the grandchildren of $\tilde{r}$ obtain the value, and the process continues. By the end of $ 2\min\{|A|, |B|\} $ rounds, all agents have received this value from $\tilde{r}$.

So, in the next $ 2\min\{|A|, |B|\} $ rounds, all agents have the total butterfly count. In a similar way, the algorithm can compute butterflies for each $ b_i \in B $ by executing Phases 1 and 2 for agents in $ B $.

\begin{remark}\label{rem: for nodes in B}
    The total number of butterflies in a bipartite graph is half the sum of the individual butterfly counts of all nodes in either $ A $ or $ B $~\cite{rectangle, SST-2018-KDD}. 
\end{remark}

\begin{lemma}[Time and Memory Complexity]\label{b_count}
    The agents calculate (i) the number of butterflies based on their own node in $O(\Delta)$ rounds, and (ii) the total number of butterflies in $G$ in $O(\Delta+\min\{|A|,|B|\})$ rounds. 
    \end{lemma}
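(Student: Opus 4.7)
The plan is to decompose the algorithm into its three phases and account for the rounds used by each. For Phase 1, I would formalize the alternating schedule: on odd rounds the agents in $A$ move through their incident ports in sequence while those in $B$ stay put, and on even rounds the roles are swapped. Because the non-active partition is globally stationary, every movement by an active agent reaches the intended neighbor, so after at most $\Delta$ odd rounds each $a_i \in A$ has met every neighbor in $N(a_i)$, and symmetrically for $B$ on the even rounds. This gives a $2\Delta$-round bound for Phase 1 and ensures each agent stores the full neighbor-ID list of its at-most $\Delta$ neighbors.

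For Phase 2, I would reuse the same alternating scheme: each $a_i$ revisits each $b \in N(a_i)$ and, using the list $N(b)$ already stored at $b$ from Phase 1, computes the common-neighbor count $N(a_i,a_j)$ for every other $a_j \in A$. The arithmetic $\binom{N(a_i,a_j)}{2}$ and the running sum $B(a_i)=\sum_j B(a_i,a_j)$ are local and do not consume extra rounds, so Phase 2 also finishes in $O(\Delta)$ rounds. Combining Phases 1 and 2 settles part (i).

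For part (ii), I would add Phase 3's cost on top of Phases 1 and 2. Upward aggregation of the $B(a_i)$ values along the spanning tree can run in parallel across siblings; each internal agent only needs to wait until all children have reported, so the total time is proportional to the tree height. By Remark~\ref{rem:diam} this height is at most $\min\{|A|,|B|\}$, so $\tilde{r}$ obtains $\sum_i B(a_i)$ within $O(\min\{|A|,|B|\})$ rounds and halves it by Remark~\ref{rem: for nodes in B} to obtain the global butterfly count. The downward broadcast from Section~\ref{sec:comm} then uses a two-round oscillation per tree level, propagating the final count to every agent within an additional $2\min\{|A|,|B|\}$ rounds. The main subtlety I anticipate is formally ruling out missed encounters during Phase 1's alternating movements (and, implicitly, during Phase 2's revisits), since the CCM model allows only co-located agents to interact; once this is handled carefully, summing the three phases yields the claimed $O(\Delta + \min\{|A|,|B|\})$ bound.
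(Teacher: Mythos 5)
Your proposal is correct and follows essentially the same route as the paper, which gives no explicit proof of this lemma but establishes the bound implicitly through its phase descriptions: $2\Delta$ rounds each for the alternating neighbor-exchange of Phase 1 and the revisiting of Phase 2, then $O(\min\{|A|,|B|\})$ rounds for upward aggregation and the oscillating downward broadcast of Section~\ref{sec:comm}, with the tree-height bound supplied by Remark~\ref{rem:diam}. Your added attention to ruling out missed encounters in the alternating schedule is a reasonable refinement of the same argument rather than a different approach.
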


    \begin{lemma}[Memory Analysis]\label{b_memory}
    The butterfly counting algorithm requires $O(\Delta\log\lambda)$ bits of memory per agent.
    \end{lemma}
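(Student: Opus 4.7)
The plan is to bound the memory used by each agent component-by-component across the three phases, identify the dominant contribution, and then argue that everything else is asymptotically absorbed. I would begin by reminding the reader that the spanning-tree/leader-election preprocessing already committed $O(\log \lambda)$ bits per agent for the variables $\texttt{parent}$, $\texttt{child}$, $\texttt{sibling}$, $\texttt{nextport}$, $\texttt{treelabel}$, $\texttt{partition}$, $\texttt{leader}$, and $\texttt{completion}$ (as recorded in Remark~\ref{rem:mem}), so this baseline is already within the claimed bound.

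Next I would analyze Phase~1. Each agent $a_i$ records $N(a_i)$, which contains at most $\Delta$ neighbor IDs, each drawn from $[0,\lambda]$ and so storable in $O(\log \lambda)$ bits. This gives $O(\Delta \log \lambda)$ bits for the neighbor list itself and provides the dominant term. I would also note that during communication the synchronization is driven by the globally known $\Delta$ and the one-bit partition flag, so the bookkeeping for the round schedule adds only $O(\log \Delta) = O(\log \lambda)$ bits.

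For Phase~2 I would observe that when $a_i$ visits a neighbor $b_i$, it needs temporary access to $N(b_i)$, but $N(b_i)$ is already stored at $b_i$; $a_i$ only needs to scan through it and maintain the running counter $N(a_i,a_j)$ for the current $a_j$ under consideration, plus the partial sum $\sum_j B(a_i,a_j)$. Since each $B(a_i,a_j) \le \binom{\Delta}{2}$ and the total $B(a_i)$ is bounded by $n^2$, each of these counters fits in $O(\log n) = O(\log \lambda)$ bits. The iteration indices over $N(a_i)$ and the candidate $a_j$ also fit in $O(\log \lambda)$ bits. Thus Phase~2 contributes only $O(\log \lambda)$ additional bits on top of the already-stored neighbor list.

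Finally, for Phase~3, each agent handles an aggregate butterfly count which is at most $O(n^4)$ and therefore storable in $O(\log \lambda)$ bits, together with the tree oscillation state needed for the downcast in Section~\ref{sec:comm}, which is just a constant number of port numbers and a round counter, i.e., $O(\log \lambda)$ bits. Summing the contributions from all phases, the neighbor-list storage of Phase~1 dominates, giving the claimed $O(\Delta \log \lambda)$ bits per agent. The only mild subtlety I expect is verifying that no intermediate quantity (in particular the butterfly counters and the partial sums aggregated along the tree) exceeds $\operatorname{poly}(n)$, so that all counters remain within $O(\log \lambda)$ bits; this is immediate from the trivial bound $B(a_i) \le n^4$, so no obstacle arises.
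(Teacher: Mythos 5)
Your proof is correct and follows the same approach as the paper: the dominant term is the $N(a_i)$ list of at most $\Delta$ neighbor IDs at $O(\log\lambda)$ bits each, with all other variables and counters absorbed into lower-order terms. The paper's own proof states only this dominant-term observation; your additional phase-by-phase verification that the counters stay within $\operatorname{poly}(n)$ is a harmless (and slightly more careful) elaboration of the same argument.
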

    \begin{proof}
        Each agent $a_i$ needs at-most $\Delta$ entries for managing $N(a_i)$ list, with each entry containing an ID of $O(\log \lambda)$ bits. This forms the dominating chunk of the memory for each agent.
    \end{proof}

    \begin{theorem}[Butterfly Counting with Agents]
         Let $G((A,B),E)$ be an $n$ node arbitrary, simple, connected bipartite graph with a maximum degree $\Delta$. Let $n$ mobile agents with distinct IDs in the range $[0,n^{O(1)}]$ with the highest agent ID $\lambda\in[0,n^{O(1)}]$, be placed at each of the $n$ nodes of $G$ in a dispersed initial configuration. Then, the agents can
    \begin{enumerate}
        \item calculate the number of butterflies based on their own node in $O(\Delta)$ rounds, and
        \item calculate the total number of butterflies in $G$ in $O(\Delta+\min\{|A|,|B|\})$ rounds.
    \end{enumerate}
    using $O(\Delta\log \lambda)$ bits of memory per agent.
    \end{theorem}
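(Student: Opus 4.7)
The plan is to assemble the theorem by composing the three phases of Section~\ref{Sec: Butterfly Counting} on top of the preprocessing guaranteed by Theorem~\ref{thm: no leader}, and then verify that the time and memory bounds from Lemmas~\ref{b_count} and~\ref{b_memory} combine into the stated bounds. First I would invoke Theorem~\ref{thm: no leader} to argue that, from the dispersed initial configuration, the agents use only the upper bound $\lambda$ to elect the least-ID leader $\tilde{r}$, construct a rooted spanning tree, identify their partitions, determine $|A|$ and $|B|$, and compute $\Delta$, all within the memory budget of $O(\log \lambda)$ bits per agent. These quantities provide every agent with the synchronization handles needed for the later phases.

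For claim (1), I would bound Phases~1 and~2 together. In Phase~1, the alternating odd/even schedule driven by the known partition lets each agent scan its $\delta(a_i) \le \Delta$ ports without collision: in odd rounds agents in $A$ traverse one port and return, and in even rounds agents in $B$ do the same. Over $2\Delta$ rounds, each agent finishes recording $N(a_i)$, storing $\delta(a_i) \le \Delta$ entries of $O(\log \lambda)$ bits each. Phase~2 reuses the same schedule so that $a_i$ reads $N(b_i)$ for every $b_i \in N(a_i)$, accumulates $N(a_i, a_j)$ for every $a_j$ that shares a neighbor, and assembles $B(a_i, a_j) = \binom{N(a_i, a_j)}{2}$ whenever $N(a_i, a_j) > 1$. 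Summing yields $B(a_i)$, so after an additional $2\Delta$ rounds claim (1) follows, still within $O(\Delta \log \lambda)$ memory.

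For claim (2), I would use the spanning tree as the aggregation backbone. Each leaf sends its $B(a_i)$ upward and parents combine incoming partial sums concurrently; by Remark~\ref{rem:diam} the tree diameter is at most $2 \min\{|A|, |B|\}$, so $\tilde{r}$ learns $\sum_i B(a_i)$ in $O(\min\{|A|, |B|\})$ rounds and halves it by Remark~\ref{rem: for nodes in B} to obtain the total count. The downward broadcast of this value uses the oscillation scheme of Section~\ref{sec:comm}, where each agent alternates between its current node and its parent so that the value reaches depth $d$ in $2d$ rounds and every agent within $2\min\{|A|, |B|\}$ rounds. Adding these to the Phase~1--2 cost yields the bound $O(\Delta + \min\{|A|, |B|\})$. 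The memory bound $O(\Delta \log \lambda)$ of Lemma~\ref{b_memory} is untouched by Phases~2--3, since they only add $O(\log \lambda)$-bit counters and tree-pointer variables already in use.

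The main obstacle I anticipate is not the arithmetic of composing round counts but justifying rigorously that the alternating-port schedule of Phase~1 lets every $a_i$ meet every $b_j \in N(a_i)$ before the roles swap, despite port numbers being local. This hinges on agents already knowing their partition labels from the preprocessing stage, so that the two endpoints of every edge never try to move simultaneously along it; I would verify this as a preliminary observation before stringing the phase bounds together. Once it is in place, the three claims follow by direct composition of Theorem~\ref{thm: no leader}, Lemma~\ref{b_count}, Lemma~\ref{b_memory}, and the structural remarks above.
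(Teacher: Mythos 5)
Your proposal is correct and follows essentially the same route as the paper: the theorem is obtained by composing the preprocessing of Theorem~\ref{thm: no leader} with Phases~1--3 of Section~\ref{Sec: Butterfly Counting}, using Remark~\ref{rem:diam} for the tree-depth bound, Remark~\ref{rem: for nodes in B} for the halving, and Lemmas~\ref{b_count} and~\ref{b_memory} for the round and memory counts. Your added observation that the alternating odd/even schedule is collision-free precisely because both endpoints of every edge know their (opposite) partition labels is a worthwhile explicit justification of a step the paper leaves implicit, but it does not change the argument's structure.
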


\section{Future Directions}\label{sec: conclusion}

In this work, we developed algorithms that enable mobile agents to identify their partition in the bipartite graph, determine partition sizes, and compute a spanning tree. Additionally, we designed a leader election algorithm for $n$ agents, which extends to general graphs. Finally, we proposed algorithms for butterfly counting, a fundamental problem in bipartite graph analysis.  

We aim to extend our work by reducing the number of agents required. The goal is to analyze the trade-off between agent count, time complexity, and memory usage. As a start, we look to determine whether the same problem can be efficiently solved using only $\min\{|A|,|B|\}$ agents while minimizing computational overhead. In addition, we look to use these counting techniques to explore \emph{Tip} and \emph{Wing Decomposition} problems, which are useful in constructing a hierarchy of butterfly dense vertex and edge induced sub-graphs in bipartite networks.   

\bibliographystyle{unsrt}  
\bibliography{references}  


\end{document}